
\typeout{IJCAI--22 Multiple authors example}


\documentclass{article}
\pdfpagewidth=8.5in
\pdfpageheight=11in
\usepackage{ijcai22}

\usepackage{times}

\usepackage{soul}
\usepackage{url}
\usepackage[hidelinks]{hyperref}
\usepackage[utf8]{inputenc}
\usepackage[small]{caption}
\usepackage{graphicx}
\usepackage{amsmath}
\usepackage{amssymb}
\usepackage{booktabs}
\usepackage{amsthm}

\usepackage{multirow}
\usepackage{tabularx}

\usepackage{caption, subcaption}

\usepackage[ruled,vlined]{algorithm2e}
\include{pythonlisting}
\urlstyle{same}

\newtheorem{theorem}{Theorem}[section]

\SetKwInput{KwInput}{Input}
\SetKwInput{KwOutput}{Output}





\pdfinfo{
/TemplateVersion (IJCAI.2022.0)
}

\title{Robust Speech Representation Learning via \\ Flow-based Embedding Regularization}

\author{
Woo Hyun Kang, Jahangir Alam, Abderrahim Fathan \\
\affiliations
Computer Research Institute of Montreal \\
\emails
\{woohyun.kang, jahangir.alam, abderrahim.fathan\}@crim.ca
}

\begin{document}

\maketitle

\begin{abstract}
Over the recent years, various deep learning-based methods were proposed for extracting a fixed-dimensional embedding vector from speech signals.
Although the deep learning-based embedding extraction methods have shown good performance in numerous tasks including speaker verification, language identification and anti-spoofing, their performance is limited when it comes to mismatched conditions due to the variability within them unrelated to the main task.
In order to alleviate this problem, we propose a novel training strategy that regularizes the embedding network to have minimum information about the nuisance attributes.
To achieve this, our proposed method directly incorporates the information bottleneck scheme into the training process, where the mutual information is estimated using the main task classifier and an auxiliary normalizing flow network.
The proposed method was evaluated on different speech processing tasks and showed improvement over the standard training strategy in all experimentation.
\end{abstract}

\section{Introduction}

In recent years, attributed to the wide deployment of smart devices, the interest in speech-based applications has been rapidly growing.
One major sub-field that is gaining popularity is speech-based identity recognition task, which includes speaker verification, language identification, and voice spoof detection.
Since the given speech signals are likely to have different durations, usually an utterance-level fixed-dimensional vector (i.e., embedding vector) is extracted and fed into a scoring or classification algorithm.
To achieve this, various methods have been proposed utilizing deep learning architectures for extracting embedding vectors and have shown state-of-the-art performance when a large amount of training data is available \cite{dvec1}, \cite{dvec2}, \cite{xvec1}, \cite{xvec2}, \cite{gvlad}, \cite{resnet1}, \cite{resnet2}.
However, despite their success in well-matched conditions, the deep learning-based embedding methods are vulnerable to the performance degradation caused by mismatched conditions \cite{mismatch}.

Recently, many attempts have been made to extract an embedding vectors robust to variability unrelated to the main task \cite{mismatch}, \cite{grl}, \cite{gan}, \cite{jfe}.
Especially in \cite{jfe}, a joint factor embedding (JFE) technique was proposed where the embedding network is trained to maximize the speaker-dependent information within the embedding while simultaneously maximizing the uncertainty on unwanted attributes (e.g., channel, emotion).
Also in \cite{gan}, an adversarial training strategy is proposed, where the embedding network and a nuisance attribute discriminator network are trained in a competitive fashion.
The authors of \cite{grl} also adopted an adversarial strategy, but used a gradient reversal layer to force the embedding network to learn no information about the unwanted attributes.
Although these methods were able to enhance the verification performance, they can only be used when a training set with nuisance labels is available due to their fully-supervised nature.

In this paper, we propose a novel approach to disentangle the unwanted information from the embedding vector without the need for any nuisance labels.
Our proposed method exploits the information bottleneck framework, where the system is trained to produce a latent variable with maximum information on the main-task (e.g., speaker verification) while regularizing it to have minimum redundant information.
In order to minimize the redundancy, we estimate the upper-bound of the mutual information between the input speech and the embedding using the contrastive log-ratio upper-bound (CLUB) scheme \cite{club}.
Since CLUB requires a conditional likelihood estimator, we adopted a simple normalizing flow model which showed an outstanding performance in image generation and speech synthesis tasks.
To maximize the speaker discriminability of the embedding while minimizing its information on unwanted variability within the input speech, we trained the embedding network and the flow-based conditional likelihood estimator in a competitive fashion, similar to the generative adversarial network (GAN) \cite{gan}.
Experimental results showed that the proposed regularization technique was able to enhance the speaker verification performance by suppressing the nuisance information within the embedding vector.

The contributions of this paper are as follows:
\begin{itemize}
    \item We propose a new method to regularize the embedding network to have minimum redundancy, which can be done without the need of any additional metadata (i.e., nuisance labels).
    \item We compared the proposed regularization method on various tasks including speaker verification, language identification and voice spoof detection.
\end{itemize}

\section{Backgrounds}

\subsection{Deep Speech Embedding}

\begin{figure}
	\centering
	\includegraphics[width=0.95\linewidth]{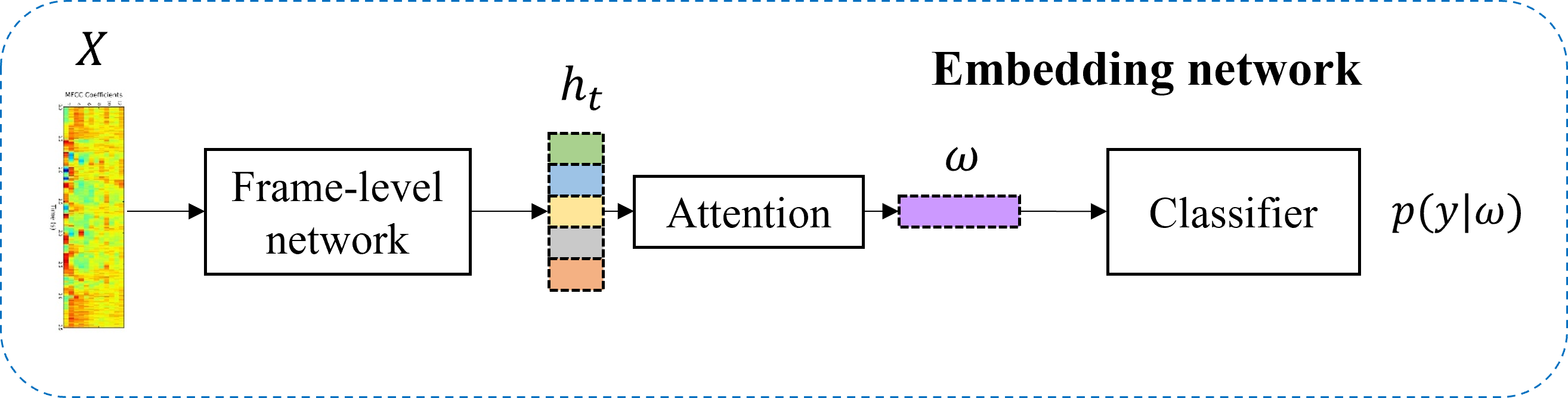}
	\caption{The general architecture of the conventional deep speech embedding systems.}
	\label{emb_net}
\end{figure}

In the past several years, various deep architectures for speech embedding extraction have been proposed.
In most of these frameworks, given a speech utterance $\mathbf{X}$ with $T$ frames, a sequence of frame-level acoustic features $\{\mathbf{x}_{1},...,\mathbf{x}_{T}\}$ extracted from $\mathbf{X}$ is fed into the frame-level network.
Once the frame-level outputs $\{\mathbf{h}_{1},...,\mathbf{h}_{T}\}$ are obtained, they are aggregated to obtain an utterance-level representation.
One way of aggregating the frame-level outputs is the self-attentive pooling (SAP) \cite{sap}, which computes the weighted average as
\begin{equation} \label{attention_1}
\mathbf{\omega}=\sum_{t=1}^{T}{\alpha}_{t}\mathbf{h}_{t}
\end{equation}
where $\alpha_{t}{\in}[0, 1]$ is a normalized weight, which is computed by
\begin{equation} \label{attention_2}
{\alpha}_{t}=\frac{{\exp}(e_t)}{\sum_{t=1}^{T}{\exp}(e_t)}.
\end{equation}
In (\ref{attention_2}), the frame-level score (i.e. attention) $e_t$ is computed as follows:
\begin{equation} \label{attention_3}
e_t=\mathbf{v}^{\intercal}_{t}{\tanh}(\mathbf{W}_{t}\mathbf{h}_{t}+\mathbf{b}_{t})
\end{equation}
where $\mathbf{v}_t$, $\mathbf{W}_t$, and $\mathbf{b}_t$ are trainable parameters and superscript $\intercal$ indicates transpose operation.
By using different weight for each frame, speech frames with relatively higher target-relevancy can contribute more to the embedding vector.

The embedding network is trained to maximize the target discriminability.
Depending on the main task, different types of objective functions are used to achieve this.
For binary classification tasks, such as antispoofing, the system is often trained via one-class softmax objective function, which can be formulated as \cite{zhang2021ocs}:
\begin{equation}
    L_{OCS} = - \frac{1}{N} \sum_{i=1}^N
    log(1+e^{k(m_{y_i} - \hat{W}_0 \hat{\omega}_i) (-1)^{y_i} })
    \label{ocs}
\end{equation}
where $k$ is the scale factor, $\omega_i \in R^D$ and $y_i \in \left\{0,1\right\}$ are the D-dimensional embedding vector and label of the $i^{th}$ sample respectively. 
$N$ is the mini-batch size and $m_{y_i}$ defines the compactness margin for class label $y_i$. 
The larger is the margin, the more compact the embeddings will be. 
$W_0$ is the weight vector of our target class embeddings. Both $\hat{W}_0$ and $\hat{\omega}_i$ are normalizations of $W_0$ and $\omega_i$ respectively.

On the other hand, for multi-class classification tasks, one of the most popular softmax variant objectives is the additive angular margin softmax (AAMSoftmax) \cite{aamsoftmax}.
The AAMSoftmax objective is formulated as follows:
\begin{equation}
    L_{AAMSoftmax} = - \frac{1}{N} \sum_{i=1}^N log(\frac{e^{s(cos(\theta_{y_i,i} + m))}}{K_1}),
    \label{aam}
\end{equation}
where $K_1={e^{s(cos(\theta_{y_i,i} + m))} + \sum_{j=1, j \neq i}^c e^{s cos \theta_{j,i}}}$, $N$ is the batch size, $c$ is the number of classes, $y_i$ corresponds to label index, $\theta_{j,i}$ represents the angle between the column vector of weight matrix $W_j$ and the $i$-th embedding $\omega_i$, where both $W_j$ and $\omega_i$ are normalized. 
The scale factor $s$ is used to make sure the gradient is not too small during the training and $m$ is a hyperparameter that  encourages the similarity of correct classes to be greater than that of incorrect classes by a margin $m$.


\subsection{Information Bottleneck}
The information bottleneck is a theoretical method for learning a latent representation with minimum redundant information.
Given an input data $X$ and its corresponding target label $y$, the information bottleneck aims to learn an encoder that produces a latent variable (embedding) $\omega$ with high information on $y$ and low relevance with the source $X$.
To achieve this, the objective for information bottleneck can be written as:
\begin{equation} \label{ib}
    L_{IB}=-I(y;\omega)+\beta{I(X;\omega)}
\end{equation}
where hyperparameter $\beta$ is a positive scalar coefficient.

Maximizing $I(y;\omega)$ can be easily accomplished by using a typical discriminative objective function including softmax or contrastive losses.
This is mainly attributed to the fact that the softmax and contrastive objective functions can be interpreted as a lower-bound of the mutual information.
Moreover, one could use neural estimation methods such as mutual information neural estimation (MINE) \cite{mine} or InfoNCE \cite{cpc}, which also estimates the lower bound of the mutual information.

On the other hand, minimizing $I(X;\omega)$ can be difficult due to the complex distribution of $X$.
In order to alleviate this problem, most previous approaches regularized the latent distribution to be Gaussian.
However, using such simple distribution for the embedding may hinder its representational ability.
Therefore various attempts have been made to exploit the neural mutual information estimation scheme (e.g., MINE, InfoNCE).
Although these neural estimation methods do not assume the embedding to follow a certain distribution, minimizing these estimations showed minimum improvement since they estimate the lower-bound mutual information.
Thus in order to effectively minimize $I(X;\omega)$, one should estimate the upper-bound mutual information.

\subsubsection{Contrastive Log-ratio Upper-Bound Mutual Information}
CLUB \cite{club} is a mutual information estimation method that is trained via contrastive learning.
Given the conditional distribution $p(X|\omega)$, the mutual information CLUB is defined as:
\begin{equation}
    \begin{aligned}
        I_{CLUB}(X;\omega)=&E_{p(X,\omega)}[\log{p(X|\omega)}]\\
        &-E_{p(X)p(\omega)}[\log{p(X|\omega)}].
    \end{aligned}
    \label{club}
\end{equation}
Unlike the mutual information estimated via MINE or InfoNCE, $I_{CLUB}(X;\omega)$ is the upperbound of the true mutual information $I(X;\omega)$.

\subsection{Normalizing Flow}
Normalizing flow is a generative model which consists of a stack of invertible functions which map the samples from a simple distribution $p_{z}(z)$ to a complex distribution $p_{X}(x)$.
Let $f_i$ be a mapping from $z^{i-1}$ to $z^{i}$, $z^0=x$ and $z^n=z$.
Then $x$ is transformed into $z$ through a chain of invertible mappings:
\begin{equation}
    z=f_{n}{\circ}f_{n-1}{\circ}{\dots}{\circ}f_1(x).
\end{equation}
By the change of variables theorem, the log-likelihood of $x$ can be computed as follows:
\begin{equation}
    \log{p_X(x)}=\log{p_Z(z)}+\sum^{n}_{i=1}\log{|\det{\frac{{\partial}f_i}{{\partial}z^{i-1}}}|}.
\end{equation}
Usually, the latent distribution $p_Z$ is set to be a standard normal distribution $N(0, I)$.
Since the second term $\sum^{n}_{i=1}\log{|\det{\frac{{\partial}f_i}{{\partial}z^{i-1}}}|}$ can be computationally expensive to obtain, $f_i$ is required to have a tractable Jacobian.
One way to achieve this is to use an affine coupling layer which is defined as follows:
\begin{equation}
    z_{a}^{i}=z_{a}^{i-1},
\end{equation}
\begin{equation}
    z_{b}^{i}=z_{b}^{i-1}{\odot}\exp(\sigma(z_{a}^{i-1})+\mu(z_{a}^{i-1})),
    \label{affine_flow}
\end{equation}
where $z_{a}^{i}$ is the first half, $z_{b}^{i}$ is the second half of $z^i$, and $\odot$ is the channel-wise product.
The Jacobian matrix of the affine coupling layer is a lower triangular matrix, which allows efficient computation for $\log{|\det{\frac{{\partial}f_i}{{\partial}z^{i-1}}}|}$:
\begin{equation}
    \log{|\det{\frac{{\partial}f_i}{{\partial}z^{i-1}}}|}=\sum^{D/2}_{j=1}\sigma(z_{a}^{i-1})_j,
\end{equation}
where $\sigma(z_{a}^{i-1})_j$ is the $j^{th}$ element of $\sigma(z_{a}^{i-1})$.

\section{Flow-ER: Flow-based embedding regularization}


\begin{figure}
	\centering
	\includegraphics[width=0.95\linewidth]{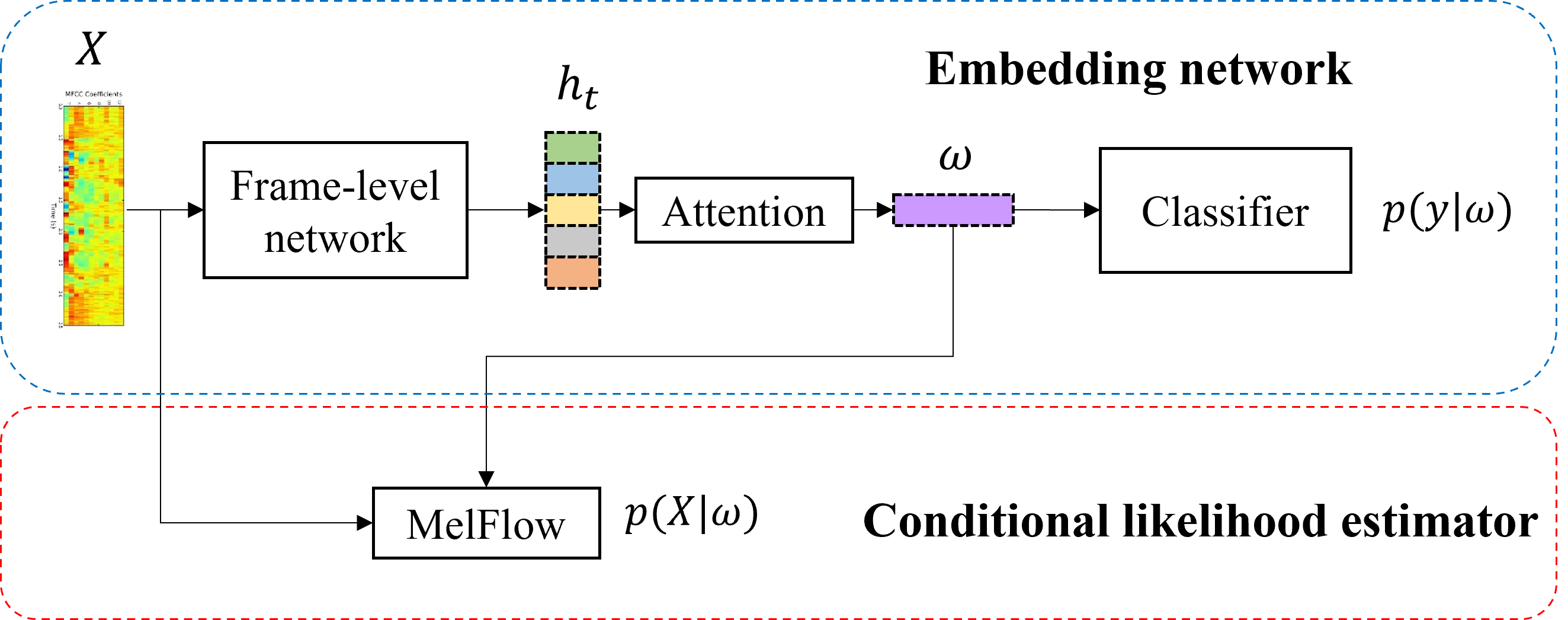}
	\caption{The general architecture of the proposed Flow-ER framework.}
	\label{flower_arch}
\end{figure}

In our proposed method, we aim to extract an embedding $\omega$ from the input speech $X$ with maximum target information while suppressing redundant information latent within $X$ (e.g., channel, noise).
To achieve this, we train the embedding network according to the information bottleneck scheme described in Equation \ref{ib}.

\subsection{Mutual information interpretation of the cross-entropy loss}
In order to maximize $I(y;\omega)$, our system adopts cross-entropy-based objective functions, such as softmax-based losses.
The cross-entropy-based loss functions can be interpreted as the lower bound of the mutual information as in the MINE framework \cite{mine}.
\begin{theorem}
Let $\omega$ and $y$ be random variables with respective priors $p(\omega)$ and $p(y)$, and joint distribution $p(\omega, y)$.
Then for an arbitrary function $g(\omega, y)$, the following holds:
\begin{equation}
L_{xent}=-E_{\omega, y \sim p(\omega, y)}[\log\frac{g(\omega, y)}{\sum_{j=1}^{N}g(\omega, j)}]{\leq} I(y;\omega)
\end{equation}
\end{theorem}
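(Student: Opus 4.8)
The plan is to first turn the arbitrary nonnegative function $g$ into a probabilistic object. Normalizing over the $N$ candidates, I define the induced conditional model $q(y \mid \omega) = g(\omega, y) / \sum_{j=1}^{N} g(\omega, j)$, which is a valid distribution over the $N$ classes whenever $g>0$. Under this identification the quantity in the statement is exactly the expected cross-entropy $L_{xent} = -E_{p(\omega,y)}[\log q(y \mid \omega)]$, so the goal reduces to comparing this cross-entropy against the mutual information written in decomposed form $I(y;\omega) = H(y) - H(y \mid \omega) = D_{KL}\big(p(\omega,y)\,\|\,p(\omega)p(y)\big)$. This reformulation is what makes the MINE-style lower-bound machinery applicable.

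Next I would invoke the variational (MINE/Donsker--Varadhan) representation of the mutual information, writing $I(y;\omega)$ as the supremum over critics of $E_{p(\omega,y)}[T(\omega,y)] - \log E_{p(\omega)p(y)}[e^{T(\omega,y)}]$, and then plug in the critic induced by $g$, namely $T(\omega,y)=\log q(y\mid\omega)$ up to the log-partition term, tracking how the normalization over the $N$ candidates reshapes the bound. The engine of the argument is Gibbs' inequality: the cross-entropy $-E[\log q(y\mid\omega)]$ differs from the conditional entropy $H(y\mid\omega)$ by the nonnegative gap $E_{p(\omega)}[D_{KL}(p(y\mid\omega)\,\|\,q(y\mid\omega))] \ge 0$. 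Combining this gap with the decomposition of $I(y;\omega)$ and with the log-sum-exp normalization over the $N$ terms is what should place $L_{xent}$ on the correct side of the mutual information and yield the claimed $L_{xent} \le I(y;\omega)$.

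I expect the main obstacle to be the bookkeeping of the normalization constant rather than any single inequality. Because the denominator aggregates $N$ terms, the naive cross-entropy/entropy comparison produces an additive offset (a $\log N$ or marginal-entropy term), and the delicate step is to show this offset is absorbed so that the expected cross-entropy is dominated by $I(y;\omega)$ and not bounded on the opposite side. Concretely, I will have to argue that the critic $\log q(y\mid\omega)$ stays inside the MINE variational family and that the Jensen/KL gap introduced by the $\sum_{j=1}^{N} g(\omega,j)$ normalization carries the correct sign; pinning down the direction of this gap, rather than merely asserting its non-negativity, is where the real work lies.
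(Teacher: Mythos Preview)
Your Donsker--Varadhan plan is exactly the route the paper takes. It sets $G(\omega,y)=\log g(\omega,y)$, splits
\[
-L_{xent}=E_{p(\omega,y)}[G(\omega,y)]-E_{p(\omega)}\Big[\log\!\sum_{j=1}^{N}g(\omega,j)\Big],
\]
pulls the outer expectation inside the logarithm (a Jensen step), rewrites $\sum_{j=1}^{N}g(\omega,j)=N\,E_{p(y)}[g(\omega,y)]$ under an implicit uniform label prior, and then invokes the DV bound $E_{p(\omega,y)}[G]-\log E_{p(\omega)p(y)}[e^{G}]\le I(y;\omega)$ together with $-\log N\le 0$. So the ``bookkeeping of the normalization constant'' you anticipated is precisely one Jensen move plus the $\log N$ split; no Gibbs inequality enters the paper's argument.

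Your secondary Gibbs route cannot close the bound in the stated direction: $L_{xent}=H(y\mid\omega)+E_{\omega}\big[D_{KL}(p(\cdot\mid\omega)\|q(\cdot\mid\omega))\big]\ge H(y\mid\omega)$ gives $L_{xent}\ge H(y)-I(y;\omega)$, not $L_{xent}\le I(y;\omega)$. Your instinct that ``pinning down the direction of this gap'' is the real issue is correct, and in fact the concern lands on the paper itself. First, the displayed chain in the paper bounds $-L_{xent}$, not $L_{xent}$, so the last line of the proof and the theorem statement disagree by a sign. Second, the Jensen step is applied in the wrong direction: concavity of $\log$ gives $E[\log S]\le\log E[S]$, hence $-E[\log S]\ge-\log E[S]$, the reverse of what is written. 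Third, the inequality $L_{xent}\le I(y;\omega)$ is false in general (take $\omega$ independent of $y$ and $g\equiv 1$: then $L_{xent}=\log N>0=I(y;\omega)$). The statement that is both true and serves the paper's purpose is the InfoNCE-type bound $\log N-L_{xent}\le I(y;\omega)$ under a uniform label prior (equivalently $H(y)-L_{xent}\le I(y;\omega)$), which your Gibbs computation already delivers; you should target that formulation rather than the one printed.
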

\begin{proof}
Let us define $G(\omega, y)={\log}g(\omega, y)$. 
Then we can re-write $-L_{xent}=E_{\omega, y \sim p(\omega, y)}[\log\frac{g(\omega, y)}{\sum_{j=1}^{N}g(\omega, j)}]$ as:
\begin{equation*}
\begin{split}
    \begin{aligned}
        &-L_{xent}=E_{\omega, y \sim p(\omega, y)}[\log\frac{g(\omega, y)}{\sum_{j=1}^{N}g(\omega, j)}] \\
        &=E_{\omega, y \sim p(\omega, y)}[G(\omega, y)]-E_{\omega \sim p(\omega)}[\log{\sum_{j=1}^{N}g(\omega, j)}] \\
        &{\leq}E_{\omega, y \sim p(\omega, y)}[G(\omega, y)]-{\log}E_{\omega \sim p(\omega)}[{\sum_{j=1}^{N}g(\omega, j)}] \\
        &=E_{\omega, y \sim p(\omega, y)}[G(\omega, y)] -{\log}E_{\omega \sim p(\omega)}[N E_{y \sim p(y)}[g(\omega, j)]] \\
        &=E_{\omega, y \sim p(\omega, y)}[G(\omega, y)] -{\log}E_{\omega, y \sim p(\omega)p(y)}[g(\omega, j)]-{\log}N \\
        &{\leq} I(y;\omega).
    \end{aligned}
    \end{split}
    \label{xent_mi}
\end{equation*}
\end{proof}

Therefore, minimizing the cross-entropy-based loss functions, such as Eq. \ref{ocs} or Eq. \ref{aam} can maximize the mutual information between the embedding vectors $\omega$ and the labels $y$.

\subsection{Mutual information upperbound and Conditional likelihood estimation via Normalizing Flow}

\begin{figure}
	\centering
	\includegraphics[width=0.65\linewidth]{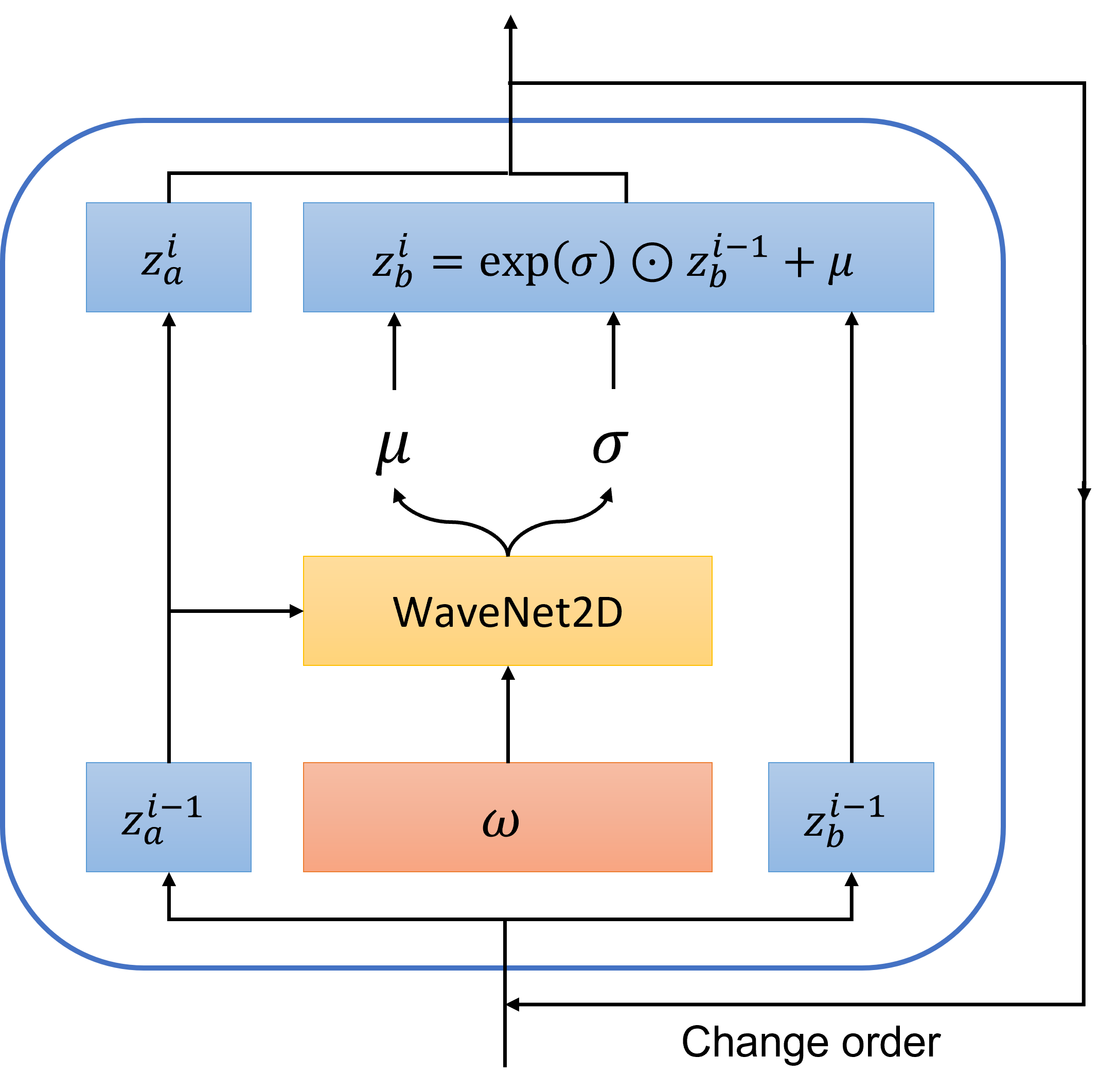}
	\caption{The general architecture of the MelFlow model.}
	\label{melflow_arch}
\end{figure}

To minimize $I(X;\omega)$, we aim to estimate and minimize the upperbound of the mutual information via the CLUB formulation Eq. \ref{club}.
However, in order to achieve this, we need to estimate the conditional likelihood $p(X|\omega)$.
To achieve this, we propose to use a conditional normalizing flow model, similar to the MelFlow proposed in \cite{melflow}.
More specifically, we use a WaveNet2D, a non-causal 2D-convolutional network for computing $\sigma(z)$ and $\mu(z)$ of Eq. \ref{affine_flow}.
But unlike \cite{melflow}, we add the speech embedding as a global condition to the WaveNet2D as follows:
\begin{equation}
    (\sigma, \mu)=WaveNet2D(z_a^{i-1}, \omega).
\end{equation}
The MelFlow operation is illustrated in Figure. \ref{melflow_arch}.

Once the MelFlow model is trained, we can estimate the mutual information upperbound as follows:
\begin{equation}
    \begin{aligned}
        L_{redundancy}=&E_{p(X,\omega)}[\log{p_X(X|\omega)}]\\
        &-E_{p(X)p(\omega)}[\log{p_X(X|\omega)}],
    \end{aligned}
    \label{redund_loss}
\end{equation}
where $\log{p_X(X|\omega)}$ is the conditional log-likelihood estimated using the MelFlow.

\subsection{Training strategy}

\begin{figure}
	\centering
	\begin{subfigure}[b]{\linewidth}
	\includegraphics[width=0.95\linewidth]{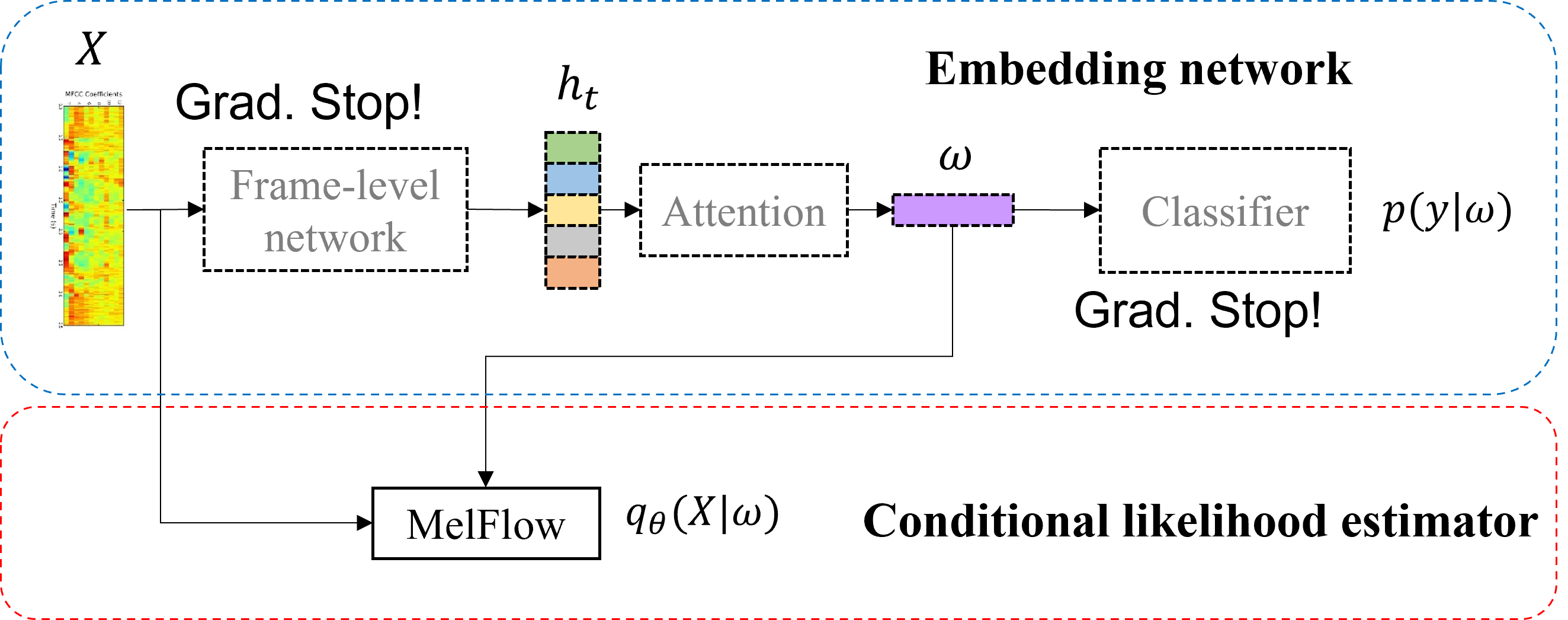}
	\caption{The MelFlow model is trained given the data and embedding pairs.}
	\end{subfigure}
	\begin{subfigure}[b]{\linewidth}
	\includegraphics[width=0.95\linewidth]{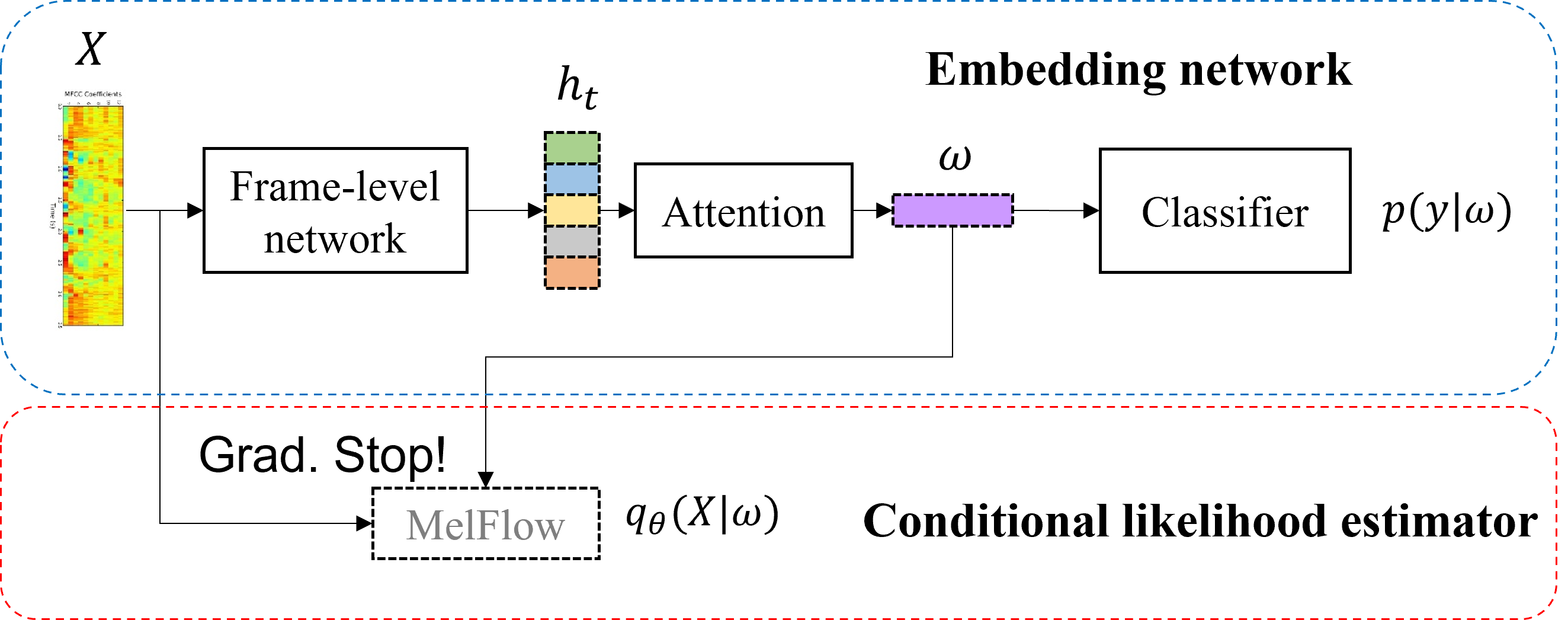}
	\caption{The embedding network is trained along with the classifier according to the information bottleneck loss $L_{IB}=-L_{xent}+{\beta}L_{redundancy}$, where $L_{redundancy}$ is computed using the conditional likelihood computed from the MelFlow.}
	\end{subfigure}
	\caption{The two step training process of the Flow-ER framework.}
	\label{flower_train}
\end{figure}

\begin{algorithm}[t]
\SetAlgoLined
\KwInput{Training set $X$, target label $Y$, embedding network with parameters $\Theta$, MelFlow network with parameters $\Phi$, information bottleneck coefficient $\beta$ and maximum epoch number $epoch^{max}$.}{}
 Initialize $\Phi$ and $\Theta$\;
 \While{$epoch < epoch^{max}$}{
  \eIf{$epoch = 0$}{
   $L_{xent} \leftarrow $ compute discriminative loss with $(X, Y, \Theta)$\;
   Optimize $\Theta$ with $L_{xent}$\;
   }{
   $\Omega \leftarrow $ extract embeddings with $(X, \Theta)$\;
   $\log p_X \leftarrow $ compute log-likelihood loss with $(X, \Omega, \Phi)$\;
   Optimize $\Phi$ with $\log p_X$\;
   $L_{xent} \leftarrow $ compute discriminative loss with $(X, Y, \Theta)$\;
   $L_{redundancy} \leftarrow $ compute redundancy loss with $(X, \Omega, \Phi$)\;
   Optimize $\Theta$ with $L_{IB}=-L_{xent}+{\beta}L_{redundancy}$\;
  }
 }
 \caption{Training steps for the Flow-ER framework}
 \label{algo_flower}
\end{algorithm}

In the proposed Flow-ER framework, the embedding network is trained according to the information bottleneck scheme, where the mutual information between the embedding $\omega$ and the label $y$ is maximized while the mutual information between $\omega$ and the input representation $X$ is minimized.
To accomplish, we optimize the network with the following objective function, which incorporates Eq. \ref{xent_mi} and \ref{redund_loss}:
\begin{equation}
    L_{IB}=-L_{xent}+{\beta}L_{redundancy},
\end{equation}
where $\beta$ is a predefined coefficient.

However, whenever the network parameters are updated, the distribution of the embedding vectors will change as well.
Therefore, in order to compute $L_{redundancy}$, the MelFlow should be updated with the new embedding vectors.
Thus as depicted in Figure \ref{flower_train} and Algorithm \ref{algo_flower}, we propose to train the embedding network and the MelFlow network in a competitive fashion, similar to the GAN training strategy.
The Flow-ER training is done in a 2-stage process: embedding network update and MelFlow update.
In the embedding network update phase, we freeze the MelFlow parameters and estimate the conditional likelihoods to compute $L_{redundacny}$. 
Then the embedding network and classification network parameters are updated through $L_{IB}=-L_{xent}+{\beta}L_{redundancy}$.
In the MelFlow update phase, the embedding network parameters are frozen and the embeddings are extracted.
Given the training data and their corresponding embeddings, the MelFlow is updated via likelihood maximization.

\section{Experiments}

To validate the impact of the proposed Flow-ER strategy in speech representation learning, we have conducted experiments in several speech processing tasks: speaker verification, voice anti-spoofing, and language identification.
In all three tasks, it is essential to maximize the discriminability in terms of the target class, while minimizing the information on the nuisance attributes.

\subsection{Experimental Setup}

\subsubsection{Speaker verification experimental setup}
In our speaker verification experiments, we have used the \textit{development} subset of the VoxCeleb2 dataset \cite{vox2}, consisting of 1,092,009 utterances collected from 5,994 speakers.
The evaluation was performed according to the original VoxCeleb1 trial list \cite{vox1}, which consists of 4,874 utterances spoken by 40 speakers.

The acoustic features used in the experiments were 40-dimensional MFCCs extracted at every 10 ms, using a 25 ms Hamming window via Kaldi toolkit \cite{kaldi}.

We have experimented with the ECAPA-TDNN \cite{ecapa}, an architecture that achieved state-of-the-art performance in text-independent speaker recognition. The ECAPA-TDNN uses squeeze-and-excitation as in the SE-ResNet, but also employs channel- and context-dependent statistics pooling and multi-layer aggregation.

The embedding networks are trained with segments consisting of 180 frames, using the ADAM optimization technique \cite{adam}.
The AAMSoftmax objective was used for training the embedding networks, and the experimented networks were implemented via PyTorch \cite{pytorch}, based on the voxceleb-trainer open-source project \cite{resnet1}\footnote{\url{https://github.com/joonson/voxceleb_trainer}}. 
The networks were trained with initial learning rate 0.001 decayed with ratio 0.95 for 150 epochs, and the models from the best performing checkpoint were selected.
The batch size for training was set to be 200.
Cosine similarity was used for computing the verification scores in the experiments.

\subsubsection{Voice anti-spoofing experimental setup}
In our anti-spoofing experiments, we have used the \textit{training} subset the ASVspoof 2019 challenge dataset was used for optimizing the systems, which provides a common framework with a standard corpus for conducting spoofing detection research on logical access (LA) attacks.
The LA dataset includes bonafide and spoof speech signals generated using various state-of-the-art voice conversion and speech synthesis algorithms. 
The evaluation was performed on the \textit{evaluation} subset, which consists seen and unseen test sets in terms of spoofing attacks. 
For more details about the corpora, the interested readers are referred to \cite{asvspoof2019}.

The acoustic feature used in the experiments was 60-dimensional (including the delta and double delta coefficients) linear frequency cepstral coefficients (LFCC) extracted using 25ms analysis window over a frame shift of 10ms.

We have experimented with the SE-ResNet-18, a variant of the ResNet-18, where a squeeze-and-excitation (SE) block \cite{se} is applied at the end of each non-identity branch of residual block to significantly decrease the computational cost of the system.

For training the experimented systems, the OCSoftmax objective function and balanced mini-batches of size 64 samples were used.
The ADAM optimizer was used with initial learning rate of 0.0003 and exponential learning rate decay with rate of 0.5 was applied \cite{monteiro2020generalized}.

\subsubsection{Language identification experimental setup}

In our language identification experiments, we have used the \textit{training} subset of the OLR2021 Challenge dataset \cite{olr21}, which consists of recordings from 13 languages.
The evaluation was performed on the \textit{progress} set of the OLR2021 Challenge dataset, where the evaluation metric was $C_{avg}$ and EER.

The acoustic features used in the experiments were 40-dimensional MFCCs and 3-dimensional pitch extracted at every 10 ms, using a 25 ms Hamming window via Kaldi toolkit \cite{kaldi}.

We have experimented with the ECAPA-TDNN \cite{ecapa} architecture as in the speaker verification experiment.
The embedding networks are trained with segments consisting of 180 frames, using the ADAM optimization technique \cite{adam}.
Analogous to the speaker verification experiments, the AAMSoftmax objective function was used for training the embedding systems.

\subsection{Results}

\begin{figure}
	\centering
	\begin{subfigure}[b]{0.49\linewidth}
	\includegraphics[width=\linewidth]{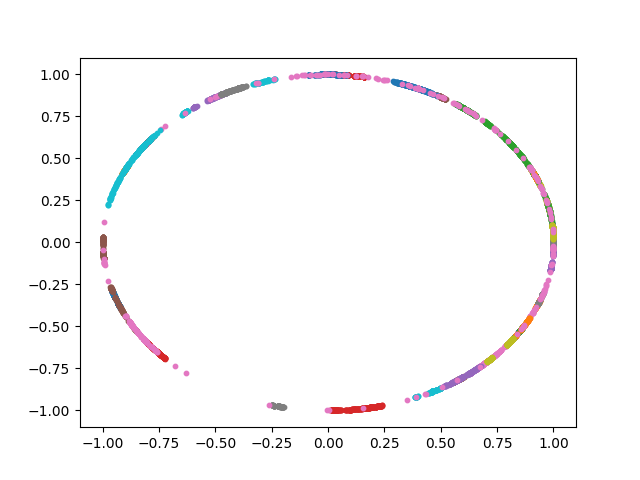}
	\caption{System trained with no regularization.}
	\end{subfigure}
	\begin{subfigure}[b]{0.49\linewidth}
	\includegraphics[width=\linewidth]{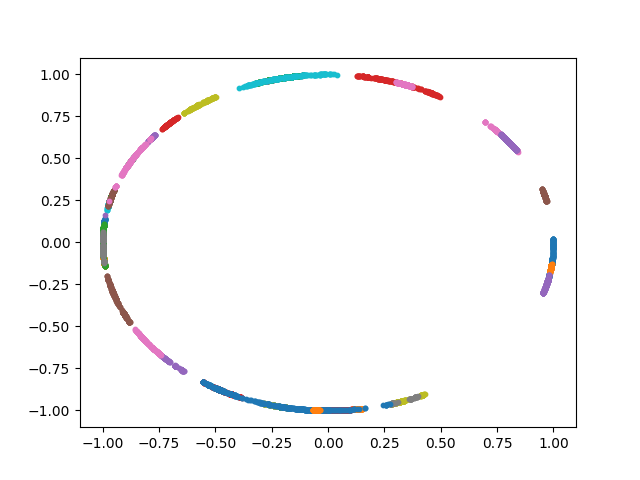}
	\caption{System trained with Flow-ER.}
	\end{subfigure}
	\caption{Normalized T-SNE plot of the speaker embeddings extracted from systems trained with and without Flow-ER. Different colors indicate distinct speakers.}
	\label{tsne_asv_flower}
\end{figure}

\begin{figure}
	\centering
	\begin{subfigure}[b]{0.49\linewidth}
	\includegraphics[width=\linewidth]{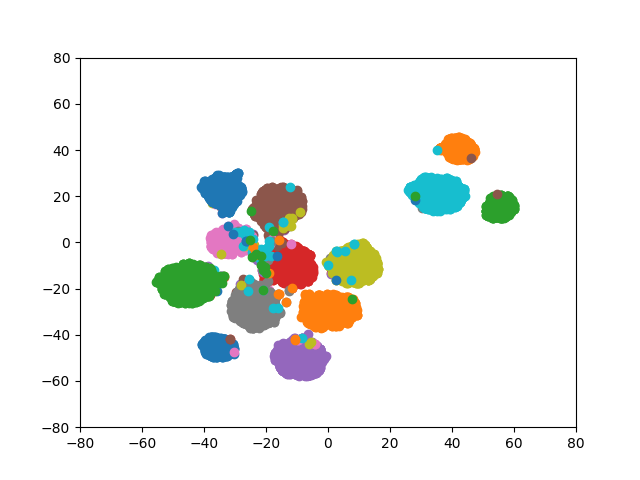}
	\caption{System trained with no regularization.}
	\end{subfigure}
	\begin{subfigure}[b]{0.49\linewidth}
	\includegraphics[width=\linewidth]{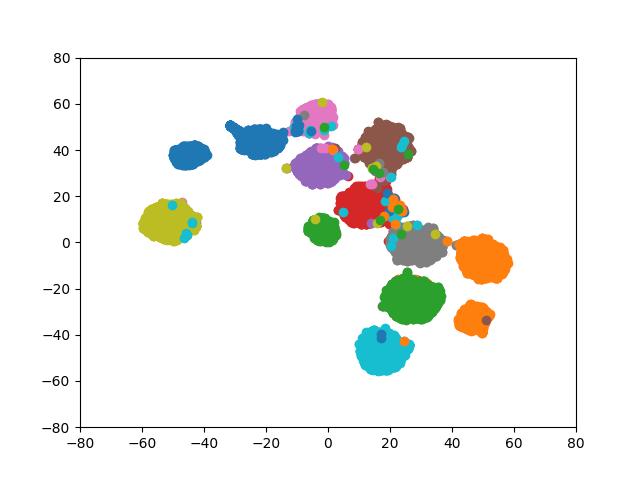}
	\caption{System trained with Flow-ER.}
	\end{subfigure}
	\caption{T-SNE plot of the language embeddings extracted from systems trained with and without Flow-ER. Different colors indicate distinct languages.}
	\label{tsne_flower}
\end{figure}

\begin{table*}[]
\centering
\caption{The performance comparison between baseline embedding system and system trained with the proposed Flow-ER on different tasks (i.e., speaker verification, anti-spoofing, language identification).}
\begin{tabular}{c|c|cc|cc|}
\cline{2-6}
\multirow{2}{*}{}                             & Speaker verification & \multicolumn{2}{c|}{Anti-spoofing}            & \multicolumn{2}{c|}{Language identification}      \\ \cline{2-6} 
                                              & EER {[}\%{]}         & \multicolumn{1}{c|}{EER {[}\%{]}} & min t-DCF & \multicolumn{1}{c|}{EER {[}\%{]}} & min $C_{avg}$ \\ \hline
\multicolumn{1}{|c|}{No regularization}       & 1.8240               & \multicolumn{1}{c|}{3.0589}       & 0.0718    & \multicolumn{1}{c|}{8.0940}       & 0.0671        \\ \hline
\multicolumn{1}{|c|}{\textbf{Flow-ER ($\beta=0.001$)}} & \textbf{1.7391}               & \multicolumn{1}{c|}{\textbf{2.8029}}       & \textbf{0.0619}    & \multicolumn{1}{c|}{\textbf{7.4370}}       & \textbf{0.0639}        \\ \hline
\end{tabular}
\label{results}
\end{table*}

\subsubsection{Analysis of the embeddings on the embedding space}

Figure \ref{tsne_asv_flower} depicts the normalized T-SNE plot of the speaker embeddings extracted from systems trained with and without the proposed Flow-ER.
From the embeddings extracted from the system trained without embedding regularization, we could observe numerous overlaps between samples from different speaker identities, which may be caused by the non-speaker attributes.
Meanwhile from the T-SNE plot of the Flow-ER embeddings, the overlapping is significantly alleviated.

Moreover, Figure \ref{tsne_flower} shows the T-SNE plot of the language embeddings extracted from systems trained with and without the proposed Flow-ER.
From the embeddings extracted using the conventional method, it could be seen that some clusters are far away from each other even if they have the same class identity.
Such variability may be attributed to the nuisance attributes, such as gender or speaker of the utterance.
On the other hand, in the embeddings trained with the proposed Flow-ER, the clusters with the same class identity are relatively much closer to each other, and the general distribution of the embeddings is more spread out than the conventional embeddings.
From these observations, we could assume that the proposed Flow-ER can help the embeddings to have better discriminability by disentangling the nuisance attributes from them.

\subsubsection{Performance of Flow-ER in different down-stream tasks}

The experimental results of the systems trained only with discriminative loss and ones trained with the proposed Flow-ER strategy are depicted in Table \ref{results}.
As shown in the results, it could be observed that the proposed Flow-ER can improve the performance in all three tasks.
Especially in anti-spoofing, the Flow-ER system outperformed the baseline with a relative improvement of 13.79\% in terms of min t-DCF.
These results tell us that the proposed Flow-ER is able to effectively improve the performance of various down-stream tasks.

\section{Conclusion}
In this paper, we proposed a novel approach, which we call Flow-ER, to disentangle the nuisance information from the speech embedding vector.
The proposed method exploits the information bottleneck framework, where the embedding network is trained to have maximum information on the main-task, while suppressing the information of the unwanted attributes.
To incorporate the information bottleneck scheme into the embedding network training process, the mutual information is estimated using the main task classifier and an auxiliary normalizing flow network.

In order to evaluate the proposed Flow-ER strategy, we have conducted several experiments on different down-stream tasks, including speaker verification, antispoofing, and language identification.
Our results showed that the Flow-ER can improve the performance in all the experimented tasks, which may be attributed to its capability in disentangling the nuisance information from the embeddings.
Especially in anti-spoofing, the Flow-ER system outperformed the baseline with a relative improvement of 13.79\% in terms of min t-DCF.

In our future study, we will be investigating the potential of the Flow-ER strategy furthermore by applying it to more diverse tasks, including image classification and anomaly detection.
Moreover, as the sensitivity of the mutual information estimation in the proposed Flow-ER will highly vary depending on the accuracy of the estimated conditional likelihood, the choice of the conditional likelihood estimator may be crucial for the disentanglement performance.
Therefore we will also focus our research on finding the optimal normalizing flow model for the conditional likelihood estimator.

\bibliographystyle{named}
\bibliography{main.bib}

\end{document}